\documentclass[runningheads]{llncs}
\usepackage{graphicx}
\usepackage{wrapfig}
\usepackage{float}
\usepackage{amsmath}
\usepackage{amssymb}
\usepackage{mathtools}
\usepackage[english]{babel}
\usepackage{cite}
\usepackage{textcmds}
\usepackage[utf8]{inputenc}
\usepackage[left]{lineno}
\usepackage{dirtytalk}
\usepackage{xspace}
\usepackage{wrapfig}

\newcommand\sbullet[1][.75]{\mathbin{\vcenter{\hbox{\scalebox{#1}{$\bullet$}}}}}

\DeclareMathOperator{\compose}{\sbullet}

\newcommand{\leftinterface}[1]{{^\ast #1}}
\newcommand{\rightinterface}[1]{{#1 ^\ast}}

\DeclareMathOperator{\matches}{\emph{matches}}
\DeclareMathOperator{\matchfree}{\emph{matchfree}}
\DeclareMathOperator{\interior}{\emph{interior}}

\DeclareMathOperator{\sbakersystem}{\emph{baker system}}
\DeclareMathOperator{\svendorsystem}{\emph{vendor system}}

\DeclareMathOperator{\sruns}{\emph{runs}}

\begin{document}
\title{Compositionality of Systems and Partially Ordered Runs}
%
%

\author{Peter Fettke\inst{1,2}\orcidID{0000-0002-0624-4431} \and
Wolfgang Reisig\inst{3}\orcidID{0000-0002-7026-2810}}
\authorrunning{P. Fettke, W. Reisig}
%

\institute{German Research Center for Artificial Intelligence (DFKI), Saarbr\"ucken, Germany \\
\email{peter.fettke@dfki.de}\\ \and
Saarland University, Saarbr\"ucken, Germany \\ \and
Humboldt-Universität zu Berlin, Berlin, Germany \\ 
\email{reisig@informatik.hu-berlin.de}}

\maketitle 
\begin{abstract}
In the late 1970s, C.A. Petri introduced partially ordered event occurrences (runs), then called \emph{processes}, as the appropriate model to describe the individual evolutions of distributed systems. Here, we present a unified framework for handling Petri nets and their runs, specifically to compose and decompose them. It is shown that, for nets $M$ and $N$, the set of runs of the composed net $M \compose N$ equals the composition of the runs of $M$ and $N$.

\keywords{compositionality \and theory of modeling \and discrete systems \and behavior modeling \and composition calculus \and Petri nets}
\end{abstract}

\section{Motivation}\label{sec:0}

In the late 1970s, C.A. Petri proposed partially ordered event occurrences, then called \emph{processes}, as the appropriate model to describe individual evolutions of distributed systems. Their use became known as \say{partial order semantics}, in contrast to \say{interleaving semantics}, which relies on firing sequences. Although the importance of (business) process management is now widely recognized \cite{Weske_19,dumas2018fundamentals,fettke2025bpmai}, the concept of processes as understood by Petri never gained widespread acceptance. We see conceptual and technical obstacles for this. First, Petri's processes order event occurrences not temporally but causally. This is quite unfamiliar to traditionally educated computer scientists. Automata theory, the semantics of programming languages, the behavior of distributed systems, temporal logic, and other theories all organize event occurrences along a temporal scale. This is a strong and idealized assumption that ignores measurement problems in cases of temporally close, independent events. While this assumption works for many system models, it can obscure the understanding of many phenomena. Second, interleaving semantics only handles sequences of states and steps, while processes are technically more complex, especially regarding their composition and refinement.

A fundamental property of system models and their behavior is \emph{compositionality}: the composition of systems must be mirrored by a composition operator on their behavior. Interleaving semantics accomplishes this requirement to some extent by help of -- not too elegant -- \say{shuffle} operators. In this paper, we suggest a general framework for compositionality of partial order semantics. 

In general terms, the principle of \emph{compositionality} assumes two sets, $A$ and $B$, each with internal operators: $\compose_A : A \times A \rightarrow A$ and $\compose_B : B \times B \rightarrow B$, along with a mapping $f: A \rightarrow B$. Compositionality then means that for all $a, b \in A$, it holds that $f(a \compose_A b) = f(a) \compose_B f(b)$. For example, in formal language theory, automata and formal languages are compositional: to compose two finite automata $M$ and $N$, merge the final states of $M$ with the initial state of $N$, resulting in the automaton $M \compose N$. To compose two languages $A$ and $B$, concatenate each word $v$ in $A$ with each word $w$ in $B$, producing the word $vw$. Compositionality in automata and formal languages then signifies that the language $L(M \compose N)$ of the automaton $M \compose N$ equals the composition $L(M) \compose L(N)$ of the languages of $M$ and $N$. In technical terms, using the same symbol for both composition operators:

\begin{equation}\label{eq:1}
L(M \compose N) = L(M) \compose L(N).                                 \end{equation}

For place/transition nets, the situation is more complex: assume, as usual, that each occurrence sequence of a Petri net $N$ contributes a word to the language $L(N)$ of $N$. There are many ways to compose Petri nets. Most non-trivial composition operators $\compose$ imply that, for two Petri nets $M$ and $N$, a language $L(M \compose N)$ contains more words than $L(M) \compose L(N)$. Intuitively, in $M \compose N$, the net $N$ may already start working before $M$ is fully completed. To address this issue, various versions of shuffling operators have been proposed \cite{DBLP:journals/cacm/Gischer81}. 

As an alternative to occurrence sequences, Petri proposed partially ordered runs \cite{Petri_77}. These runs have been studied extensively, but a clear breakthrough is still missing. Two main reasons explain this: conceptually, partially ordered runs replace temporal order with causal order, which requires some revision of classical thinking about system behavior. Technically, notions such as composition and refinement are somewhat complex for partially ordered runs. 

In this contribution, we replace occurrence sequences with a flexible and technically simple concept of partially ordered runs that satisfies equation~\ref{eq:1}. 

The rest of this contribution is organized as follows: The next Section provides an introductory example. Section~\ref{sec:2} discusses the concept of net modules and their composition. Section~\ref{sec:3} introduces steps and runs of net modules, as well as the special case of basic steps and runs. Section~\ref{sec:4} presents the Composition Theorem, which establishes equation~\ref{eq:1}. Section~\ref{sec:5} offers a brief discussion of related work.

\section{An introductory example}\label{sec:1}

To illuminate equation~\ref{eq:1} for the case of Petri nets and their partially ordered runs, Fig.~\ref{fig:1} provides an example, with $M = \sbakersystem$ and $N = \svendorsystem$.

\begin{figure}[t]
   \centering
   \includegraphics[width=\textwidth]{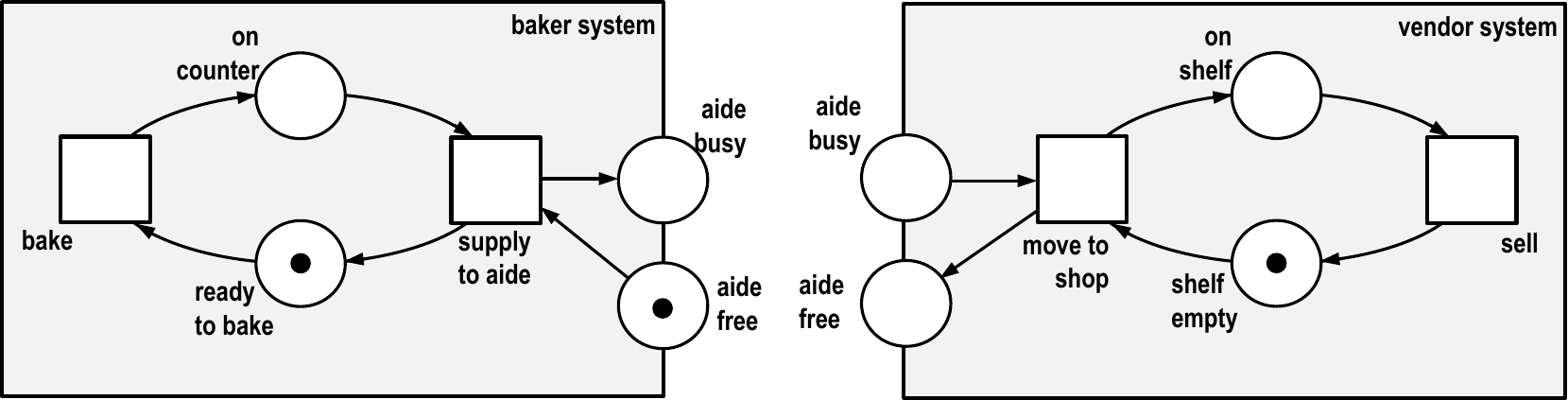}
   \caption{Baker system and vendor system}
   \label{fig:1}
\end{figure}

For the left side of equation~\ref{eq:1}, Fig.~\ref{fig:2} shows the global system, composed of the two systems of Fig.~\ref{fig:1}.

\begin{figure}[t]
   \centering
   \includegraphics[width=\textwidth]{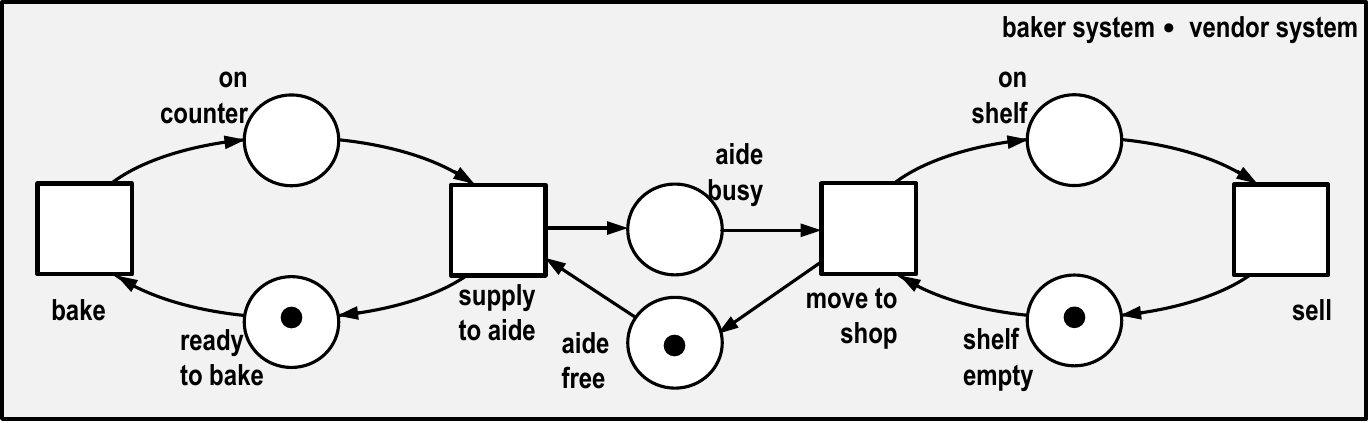}
   \caption{The global system composed of the baker system and the vendor system}
   \label{fig:2}
\end{figure}

\begin{figure}[t]
   \centering
   \includegraphics[width=\textwidth]{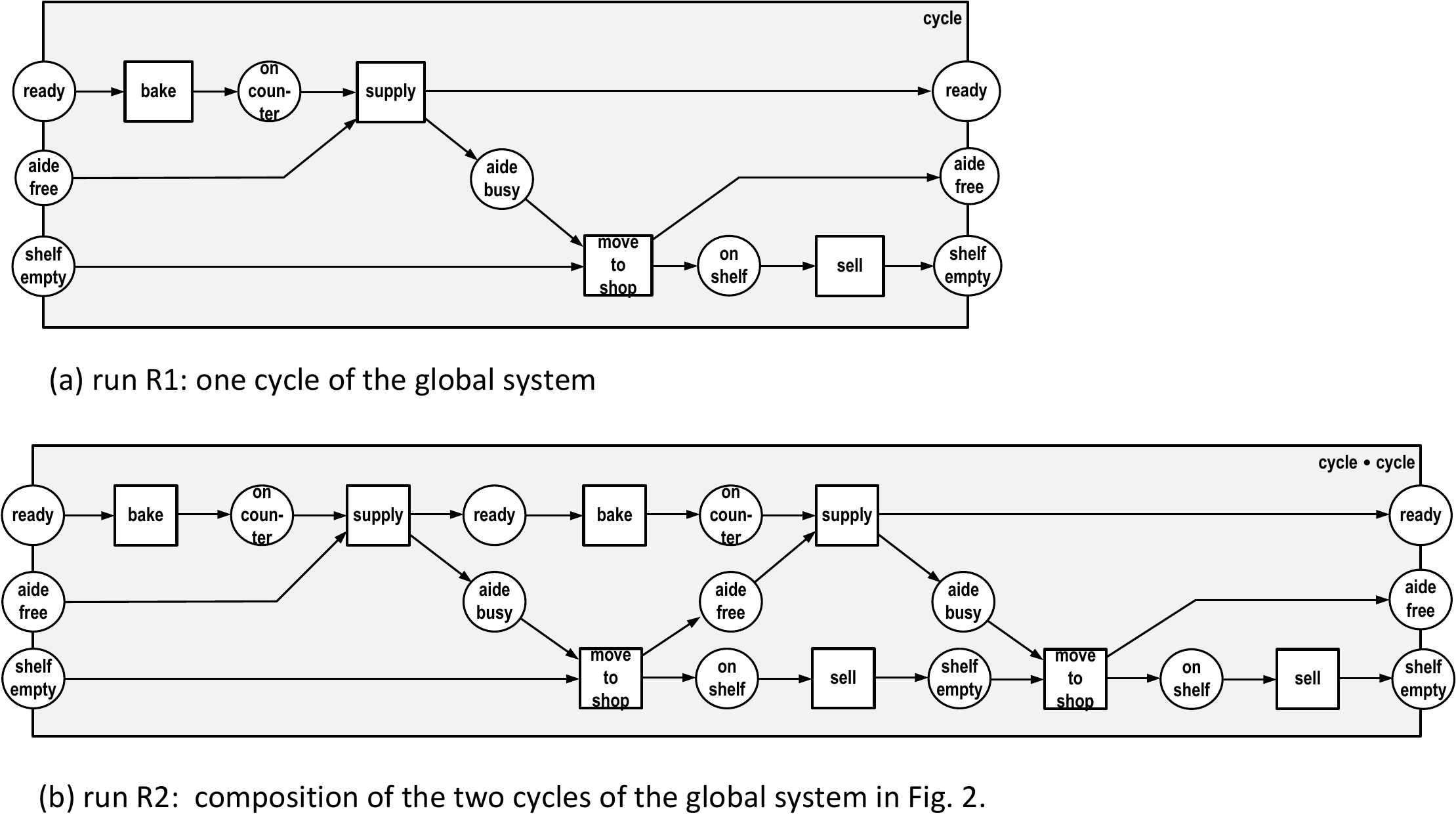}
   \caption{Runs of the global system}
   \label{fig:3}
\end{figure}

Fig.~\ref{fig:3} shows two runs of the global system, composed of the vendor system and the baker system. Fig.~\ref{fig:3}(a), called \emph{cycle}, unfolds a cycle of baking, supplying, moving, and selling a bread. Fig.~\ref{fig:3}(b) shows two intertwined such cycles. Intuitively, a run protocols each single event occurrence as a separate, individual item. For example, in Fig.~\ref{fig:3}, we can distinguish the first occurrence of \emph{bake}, which is not preceded by any event occurrence, from the second occurrence of \emph{bake}, which is preceded by the first occurrence of \emph{supply} -- and the first occurrence of \emph{bake}. Consequently, in technical terms, a sequence of arcs of a run never forms a cycle. And no place has more than one ingoing and one outgoing arc. Just as system modules, also runs can be composed. For example, the run $R_2$ of Fig.~\ref{fig:3}(b) is just the composition of two instances of the run $R_1$ of Fig.~\ref{fig:3}(a). 

Turning to the right-hand side of equation~\ref{eq:1}, we consider runs of the baker system and runs of the vendor system of Fig.~\ref{fig:1}. Fig.~\ref{fig:4} shows those runs.  The upper run displays the composition of two cycles of the baker system; the lower run shows the composition of two runs of the vendor system. Notice that the interfaces of the two runs contain elements with the same label (\emph{aide busy}).

\begin{figure}[t]
   \centering
   \includegraphics[width=\textwidth]{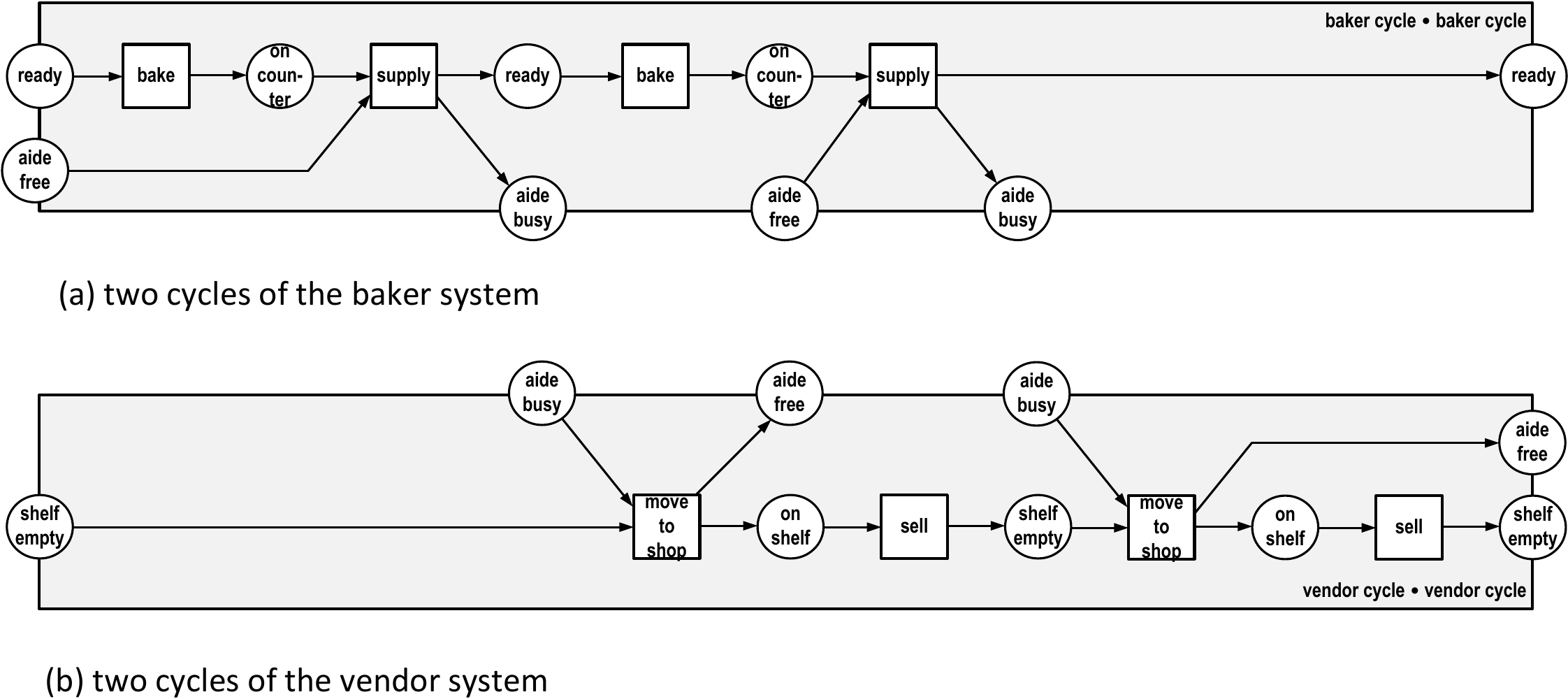}
   \caption{Runs of the baker system and the vendor system}
   \label{fig:4}
\end{figure}

The main point and motivation of the example of this Section is now the observation that the run of the global system, composed from two cycles as in Fig.~\ref{fig:3}(b), coincides with the composition of the two runs of Fig.~\ref{fig:4}! This shows the validity of equation~\ref{eq:1} for both the baker system and the vendor system. 

The systems and runs in Fig.~\ref{fig:1} to \ref{fig:4} come with places on their interfaces, as drawn on the margin of the respective surrounding box. Interfaces may also include transitions, as the systems in Fig.~\ref{fig:5} and the runs in Fig. 6 show. For example, the upmost run \emph{run of take-supply} hast two occurrences of the event \emph{supply to aide} in its right interface. The second run, \emph{run of supply-move}, hast two occurrences of \emph{supply to aide} in its left interface, and two occurrences of \emph{move to shop} in its right interface. Finally, the run \emph{run of move-sell}, has two occurrences of \emph{move to shop} in its right interface.  

\begin{figure}[t]
   \centering
   \includegraphics[width=\textwidth]{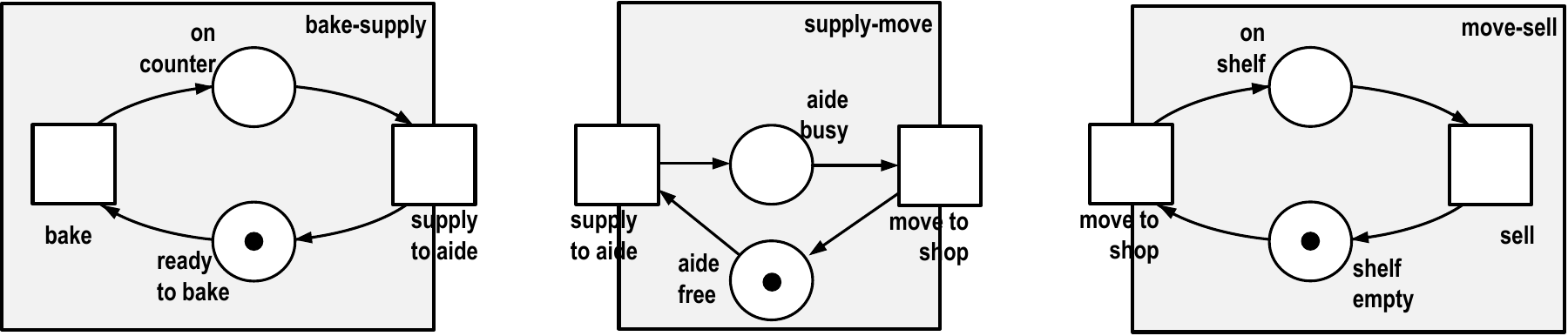}
   \caption{Three system modules}
   \label{fig:5}
\end{figure}

\begin{figure}[t]
   \centering
   \includegraphics[width=\textwidth]{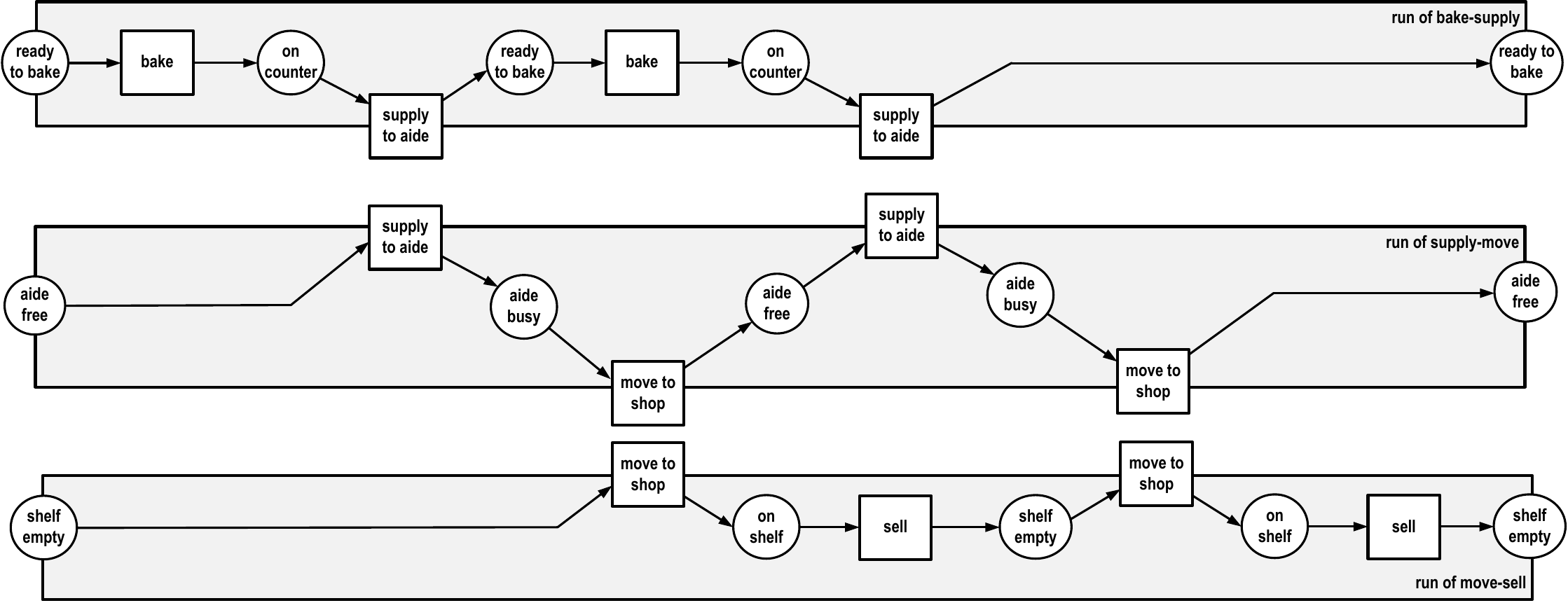}
   \caption{Two cycles of each of the three system modules of Fig.~\ref{fig:5}}
   \label{fig:6}
\end{figure}

The composition of the three runs of Fig.~\ref{fig:6} again yields again the run of Fig.~\ref{fig:3}(b).

\section{Net modules and their composition}\label{sec:2}

At the core of this Section and the entire paper is the concept of a net module: a net with interfaces that allow the module to connect with other modules. Traditional place/transition nets, along with distributed runs, can be transformed into modules. 

To stay self-contained, we review some formal fundamentals about orders and labelings of sets.

\begin{definition}[ordered and labeled set, degree]
\begin{enumerate}

\item A set $A$ is \emph{ordered} if a total and strict order $<$ is assumed.

\item A set $A$ is \emph{labeled} if a set $\Lambda$ of labels and a mapping $\lambda: A \rightarrow \Lambda$ is assumed.

\item The \emph{degree} of an element a of an ordered and labeled set $A$ is the number of elements of $A$ with the same label that are smaller than $a$.
\end{enumerate}
\end{definition}

Later on, we consider the ordered and labeled set $A = \{a, b, c, d\}$, alphabetically ordered, with labeling:

\begin{equation}\label{eq:2}
    \lambda(a) = \textit{ready}, \lambda(b) = \lambda(d) = \textit{aide busy} \text{ and } \lambda(c) = \textit{aide free}.
\end{equation}

The degree of $d$ is $2$; the degree of all other elements of $A$ is $1$. Based on these notions, the concept of interfaces is defined as follows:

\begin{definition}[interface, matching partners, matchfree]
\begin{enumerate}

\item An \emph{interface} is a finite, ordered, and labeled set.

\item For two interfaces $A$ and $B$, two elements $a \in A$ and $b \in B$ are \emph{matching partners} of $A$ and $B$ if a and b are equally labeled and have the same degree. $\matches(A, B)$ denotes the set of all matching partners of $A$ and $B$.

\item $\matchfree(A, B)$ contains all $a \in A$ without a matching partner in $B$. 

\item $\matchfree(B, A)$ contains all $b \in B$ without a matching partner in $A$. 
\end{enumerate}

\end{definition}

Let $B$ be the ordered and labeled set $B = \{e, f, g, h\}$, alphabetically ordered, with labeling:
\begin{equation}\label{eq:3}
\lambda(e) = \textit{shelf empty}, \lambda(f) = \lambda(h) = \textit{aide busy} \text{ and } \lambda(g) = \textit{aide free}.
\end{equation}
  
The degree of $h$ is $2$; the degree of all other elements of $B$ is $1$. Furthermore, the matching partners of the sets $A$ and $B$, as in equations \ref{eq:2} and \ref{eq:3} are $b$ and $f$, $c$ and $g$, as well as $d$ and $h$. $\matchfree(A, B)$ contains $a$, and $\matchfree(B, A)$ contains $e$. 

The sets $\matches(A, B)$, $\matchfree(A, B)$ and $\matchfree(B, A)$ inherit the order and labeling of $A$ and $B$, respectively, and can themselves be conceived as interfaces.

The graph structure of Petri nets is defined as usual:

\begin{definition}[net graph]
Let $P$ and $T$ be finite, disjoint sets. Let $E$ be a relation over $P \cup T$ such that for each tuple $(x, y) \in E$, either, $x \in P$ and $y \in T$, or, $x \in T$ and $y \in P$. Then $M = (P, T; E)$ is a \emph{net graph}. The elements of $P$, $T$, and $E$ are denoted as \emph{places}, \emph{transitions}, and \emph{arcs}.
\end{definition}

Generally, for net graphs, we assume separate sets of labels $\Lambda_P$ and $\Lambda_T$ for places and transitions, respectively. We are now prepared to define \emph{net modules} as net graphs together with left and right interfaces:

\begin{definition}[net module, left and right interfaces, interior of module]
Let $M = (P, T; E)$ be a net graph. Let $\leftinterface{M}$, $\rightinterface{M} \subseteq P \cup T$ be interfaces. Then $M$ together with $\leftinterface{M}$ and $\rightinterface{M}$ is a \emph{net module}. $\leftinterface{M}$ and $\rightinterface{M}$ are the \emph{left} and \emph{right interfaces} of $M$, respectively. $P \cup T  \setminus \leftinterface{M} \cup \rightinterface{M}$ is the \emph{interior} of $M$. 
\end{definition}

\emph{Notation} For technical reasons, we sometimes use the \emph{empty net module} $(\emptyset, \emptyset, \emptyset)$, also written $[\emptyset]$.  

\emph{Graphical Convention} In pictorial representations, net graphs are typically shown in the area of Petri nets, using circles, squares, and arrows. Several additional graphical conventions are used for net modules to visually represent the composition of modules in the most intuitive way.

As a standard convention, a net module $M$ is enclosed in a box, with elements of the left and right interfaces positioned on its respective left and right margins. The identity of the interface elements is usually omitted, the order of the interface elements is arranged vertically, and each element is inscribed by its label. For example, in Fig.~\ref{fig:1}, the right interface of the baker system module and the left interface of the vendor system module both contain two places. The first place is labeled \say{aide busy}, and the second one is labeled \say{aide free}. The left interface of the baker system module and the right interface of the vendor system module are empty. As a second example, Fig.~\ref{fig:3} displays two modules, all of whose left and right interfaces have identical labels.

It is often helpful to draw left interface elements on the top margin and right interface elements on the bottom margin of the enclosing box, as shown in Fig.~\ref{fig:4}. The order proceeds from left to right in both cases and extends the top-down order of the elements on the left and right margins. Fig.~\ref{fig:7} illustrates this convention for the labeled and ordered sets of equations~\ref{eq:2} and \ref{eq:3}. Coincidentally, Fig.~\ref{fig:7} shows abstract versions of the modules in Fig.~\ref{fig:4}.

\begin{figure}[t]
   \centering
   \includegraphics[scale=.45]{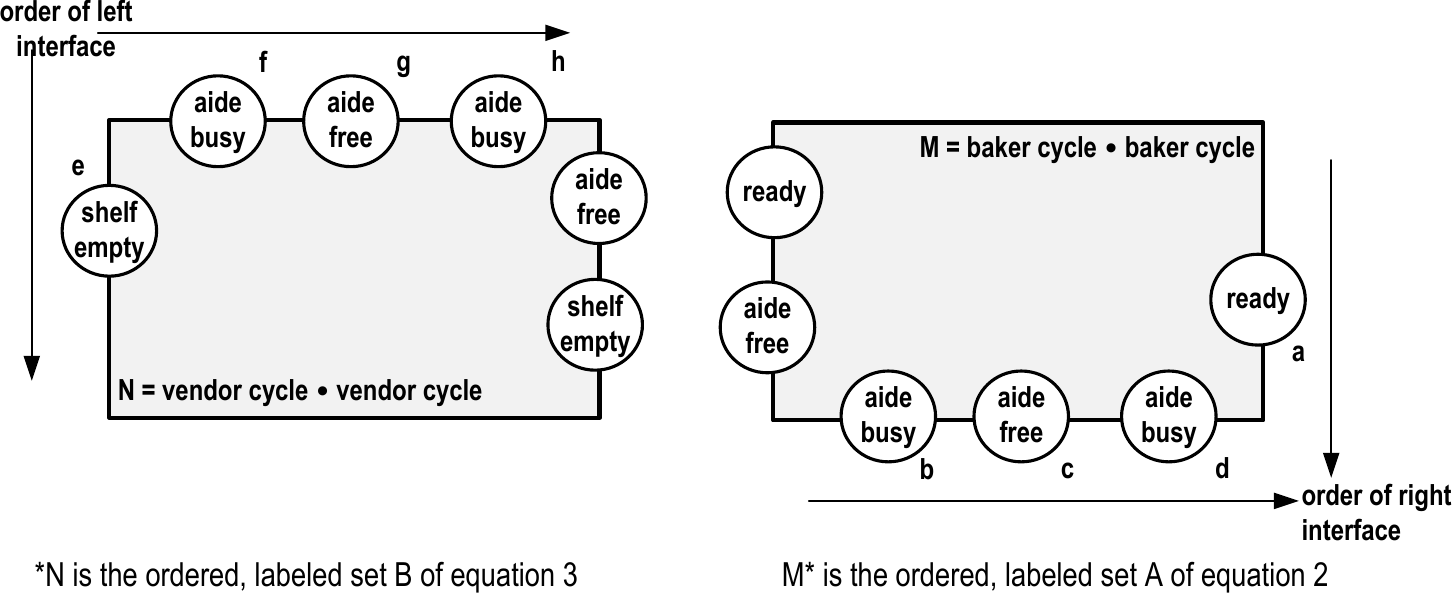}
   \caption{Abstract version of the modules of Fig.~\ref{fig:4}}
   \label{fig:7}
\end{figure}

We are now ready to present the main idea of this paper: the composition of net modules.

\begin{definition}[composition of modules]\label{def:5}
Let $M$ and $N$ be net modules. The \emph{composition of $M$ and $N$} is the net module $M \compose N$, where

\begin{itemize}

\item the $\interior$ of $M \compose N$ is $\interior(M) \cup \interior(N) \cup \matches(\rightinterface{M}, \leftinterface{N})$;

\item the left interface $\leftinterface{(M \compose N)}$ is $\leftinterface{M} \cup \matchfree(\leftinterface{N}, \rightinterface{M})$. The elements of $\leftinterface{M}$ are ordered before the elements of $\matchfree(\leftinterface{N}, \rightinterface{M})$.

\item the right interface $\rightinterface{(M \compose N)}$ is $\rightinterface{N} \cup \matchfree(\rightinterface{M}, \leftinterface{N})$. The elements of $\rightinterface{N}$ are ordered before the elements of $\matchfree(\rightinterface{M}, \leftinterface{N})$.

\item $M \compose N$ inherits the arcs of $M$ and $N$; each match $(a, b)$ inherits the arcs starting or ending at $a$ or at $b$.

\end{itemize}
\end{definition}

\begin{figure}[t]
   \centering
   \includegraphics[scale=.40]{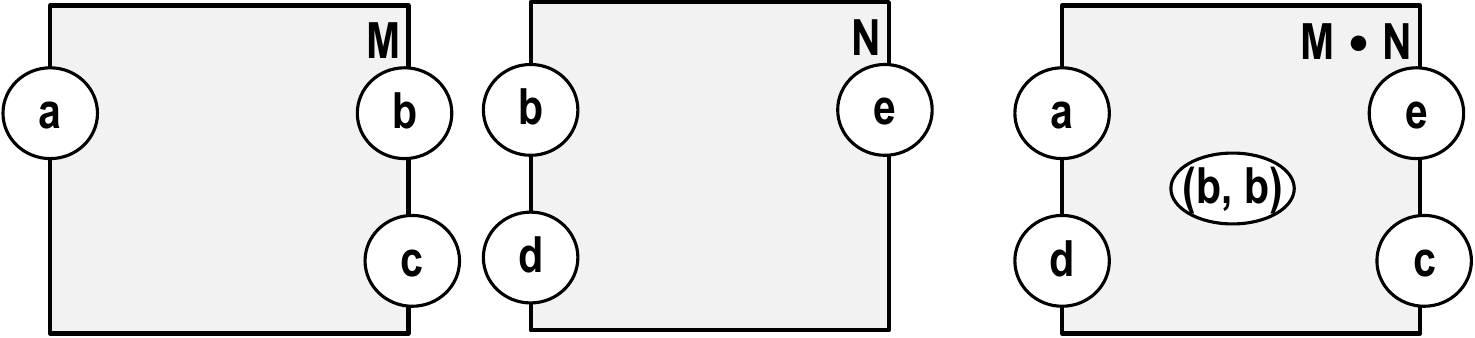}
   \caption{Scetch of the composition operation}
   \label{fig:8}
\end{figure}

\begin{figure}[t]
   \centering
   \includegraphics[width=\textwidth]{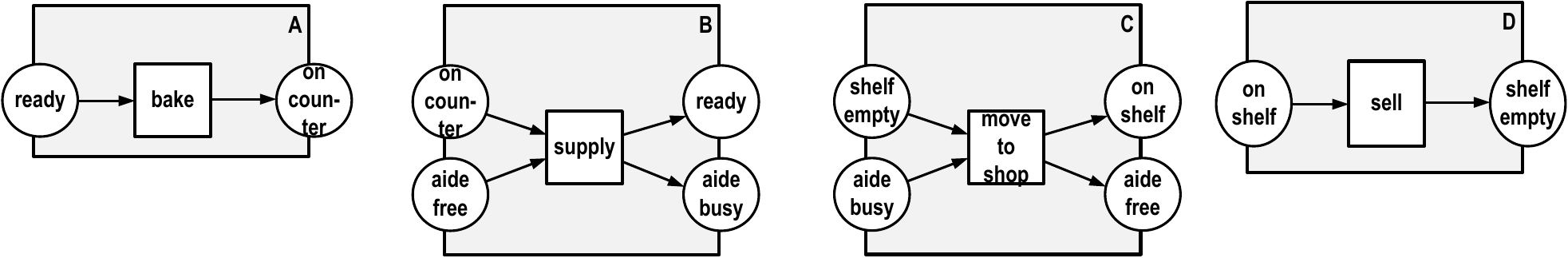}
   \caption{Steps of the net modules of Fig.~\ref{fig:2}}
   \label{fig:9}
\end{figure}

Fig.~\ref{fig:8} illustrates this definition. The interface elements are labeled $a, \dots, e$. The interiors of the modules $M$ and $N$ are not shown. The interior of $M \compose N$ consists of the interior of $M$, the interior of $N$, and the tuple of the $b$-labeled elements of $\rightinterface{M}$ and $\leftinterface{N}$. Clearly, each element of $M$ and $N$ is either an element of $M \compose N$ or contributes to a match of $\rightinterface{M}$ and $\leftinterface{N}$, which then itself is an element of $M \compose N$. The overall system in Fig.~\ref{fig:2} is the composition of the baker system with the vendor system from Fig.~\ref{fig:1}. Fig.~\ref{fig:3}(b) shows the composition of two instances of Fig.~\ref{fig:3}(a). Additionally, Fig.~\ref{fig:3}(b) is also derived as the composition of Fig.~\ref{fig:4}(a) and (b).

Obviously, for the empty net module $[\emptyset]$ holds: $M \compose [\emptyset] = [\emptyset] \compose M = M$.

\section{Runs of net modules}\label{sec:3}

A run of a net module $M$ represents one of its many evolving behaviors. A run consists of \emph{steps}. A step of a transition $t$ in $M$ indicates the occurrence of $t$: the local states that are abandoned and those that are entered. Technically, a step $S$ of $t$ is a labeled net module with a single transition $u$ labeled \say{$t$}. The surrounding places of $u$ in $S$, along with their labels, reflect the corresponding pattern of the surrounding places of $t$ in $M$. Fig.~\ref{fig:9} illustrates steps of the net module from Fig.~\ref{fig:2}. These steps are quite regularly structured: abandoned states form the left interface, entered states form the right interface.

\begin{definition}[step of a net module]\label{def:6}
Let $M$ and $A = (Q, {u}, G)$ be net modules. Let $(p, u)$ be an arc of $A$ if and only if $(\lambda(p), \lambda(u))$ is an arc of $M$, and let $(u, p)$ be an arc of $A$ if and only if $(\lambda(u), \lambda(p))$ is an arc of $M$. Then $A$ is a \emph{step of $M$}.
\end{definition}

Note that this definition does not restrict the interfaces of a step in any way. Any subset of elements of $A$ can serve as an interface of $A$. Therefore, situations are not always as straightforward as shown in Fig.~\ref{fig:9}. In Fig.~\ref{fig:10}, the step $C$ includes the entering state with label \emph{aide free} on its right interface. The step $D$ has the exiting place with label \emph{aide free} on its right interface. Fig.~\ref{fig:11} illustrates the case of transitions within interfaces. 

Matters are not always as simple, as Fig.~\ref{fig:10} shows. Fig.~\ref{fig:11} illustrates the case of transitions in interfaces. 

\begin{figure}[t]
   \centering
   \includegraphics[width=\textwidth]{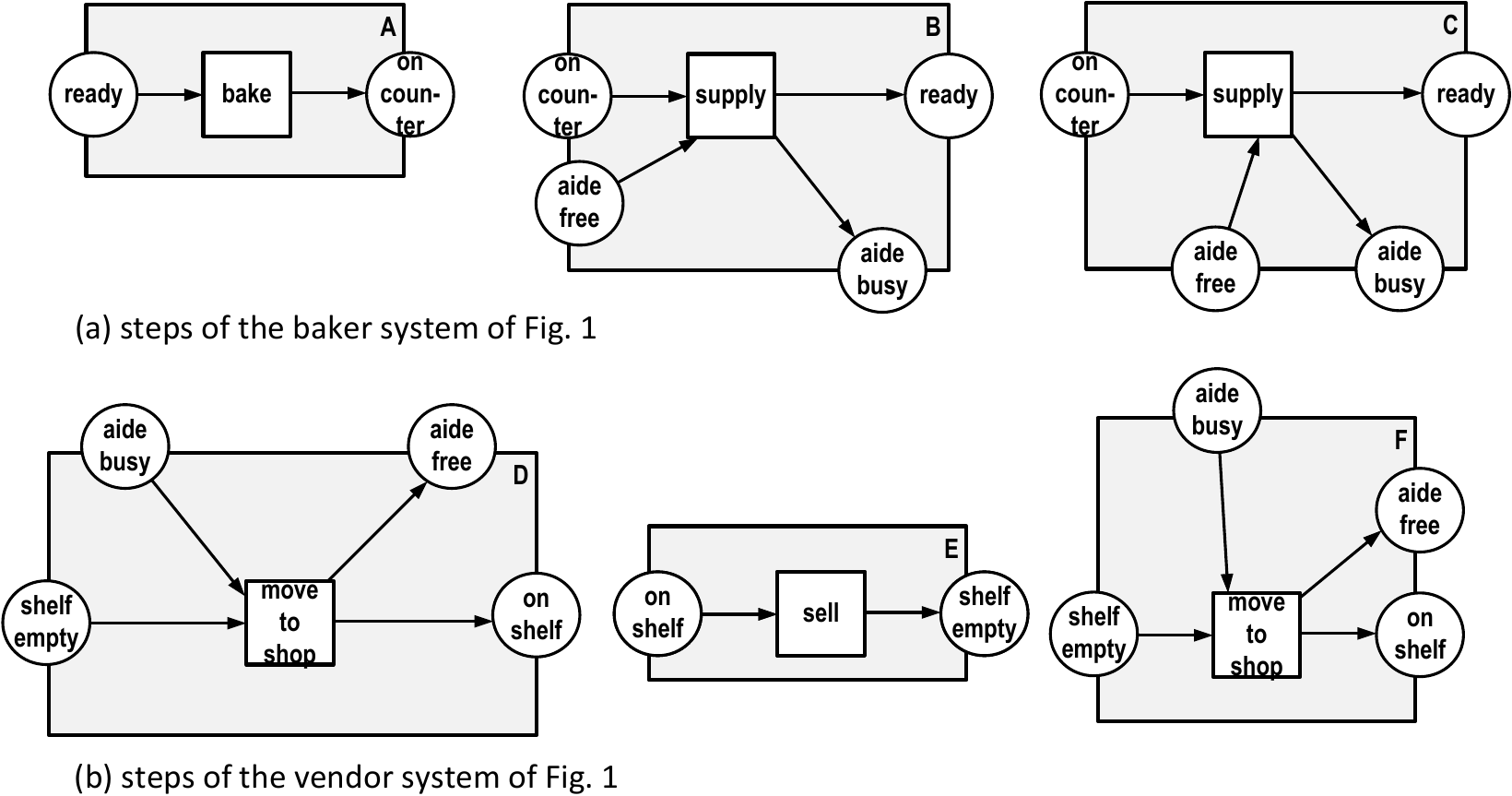}
   \caption{Steps for the runs of the baker system and the vendor system}
   \label{fig:10}
\end{figure}

\begin{figure}[t]
   \centering
   \includegraphics[scale=.45]{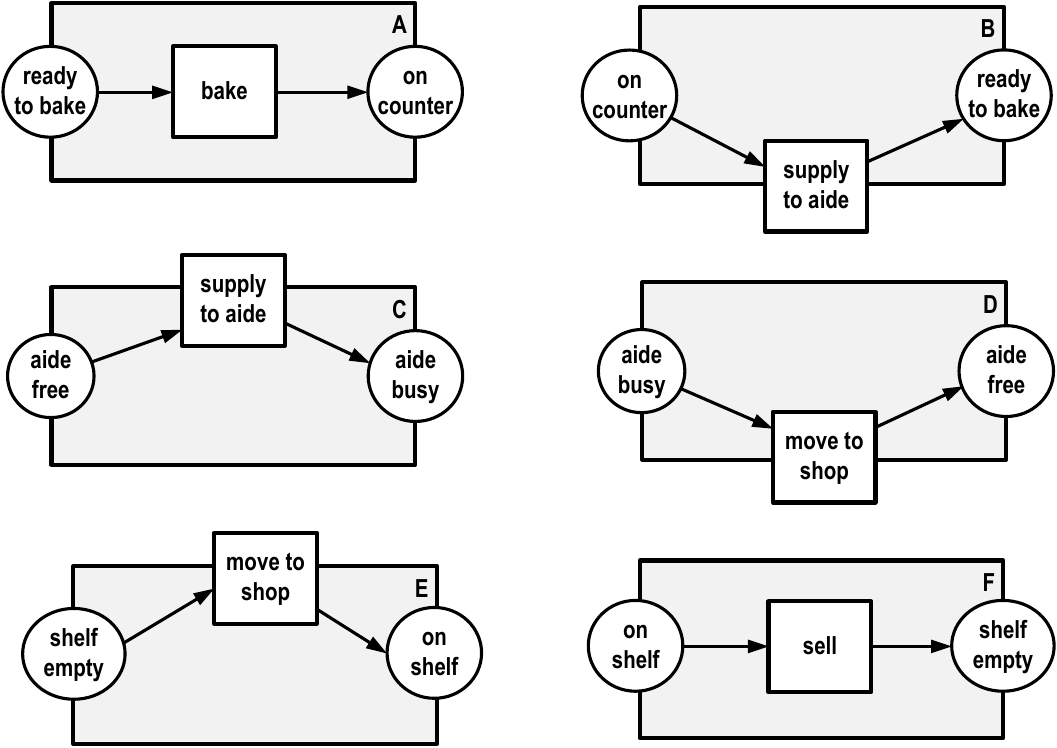}
   \caption{Steps of the three systems modules of Fig.~\ref{fig:5}}
   \label{fig:11}
\end{figure}

Not every acyclic module with unbranched places can be composed from steps: the interior of such a module may require a step order that conflicts with the order generated by the steps on the interface elements. Fig.~\ref{fig:12} shows an example. Therefore, we define runs of net modules inductively as a composition of steps.

A run of a net module $M$ will now formally be defined as composed of steps of $M$. Steps and runs are also net modules; therefore, they are composed as described in Def.~\ref{def:5}. By design, a run is acyclic, and its places are not branched, i.e., at most one arc starts and at most one arc ends at a place. However, the composition of two runs does not necessarily form a run: the result may include a cycle. Even the composition of steps may yield a cycle, as shown in Fig.~\ref{fig:13}.

\begin{figure}[t]
   \centering
   \includegraphics[scale=.45]{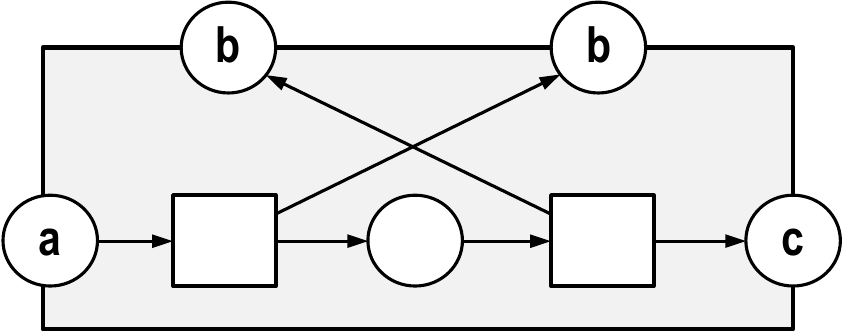}
   \caption{A partially ordered net module with unbranches places}
   \label{fig:12}
\end{figure}

\begin{definition}[runs(M)]\label{def:7}
Let $M$ be a net module. The set $\sruns(M)$ of runs of $M$ is inductively defined:

\begin{enumerate}

\item The empty net module $[\emptyset]$ is a run of $M$.   

\item Let $R$ be a run of $M$ and let $A$ be a step of $N$ such that $R \compose A$ is partially ordered. Then $R \compose A$ is a run of $M$.
\end{enumerate}
\end{definition}

For example, both runs in Fig.~\ref{fig:3} are actually runs of the global system depicted in Fig.~\ref{fig:2}. The upper run, $R_1$, can be formed from the steps in Fig.~\ref{fig:8} simply as $R_1 = A \compose B \compose C \compose D$. The lower run, $R_2$, can then be written as $R_2 = R_1 \compose R_1$. Additionally, the upper run module in Fig.~\ref{fig:4} corresponds to a run of the baker system shown in Fig.~\ref{fig:1}. It can be assembled from the steps in Fig.~\ref{fig:10}(a) as $A \compose B \compose A \compose B$. Similarly, the lower run module in Fig.~\ref{fig:4} is a run of the vendor system in Fig.~\ref{fig:1}, composed of steps from Fig.~\ref{fig:10}(b) as $D \compose E \compose D \compose E$. Lastly, the three runs in Fig.~\ref{fig:6} are created from the steps in Fig.~\ref{fig:11} as $A \compose B \compose A \compose B$, $C \compose D \compose C \compose D$, and $E \compose F \compose E \compose F$.

So far, all examples of systems have been represented as net modules with unbranched places. This applies, in particular, to the net modules in Figures~\ref{fig:1}, \ref{fig:2}, and \ref{fig:5}. The definition of runs as in Def.~\ref{def:7} does not require this limitation. Steps and runs are very well defined also for net modules with branching places. Fig.~\ref{fig:8new} shows an example, extending the system of Fig.~\ref{fig:2} by a second vendor. All runs of the global system of Fig.~\ref{fig:2} are also runs of the system in Fig.~\ref{fig:8new}. 

\begin{figure}[t]
   \centering
   \includegraphics[scale=.40]{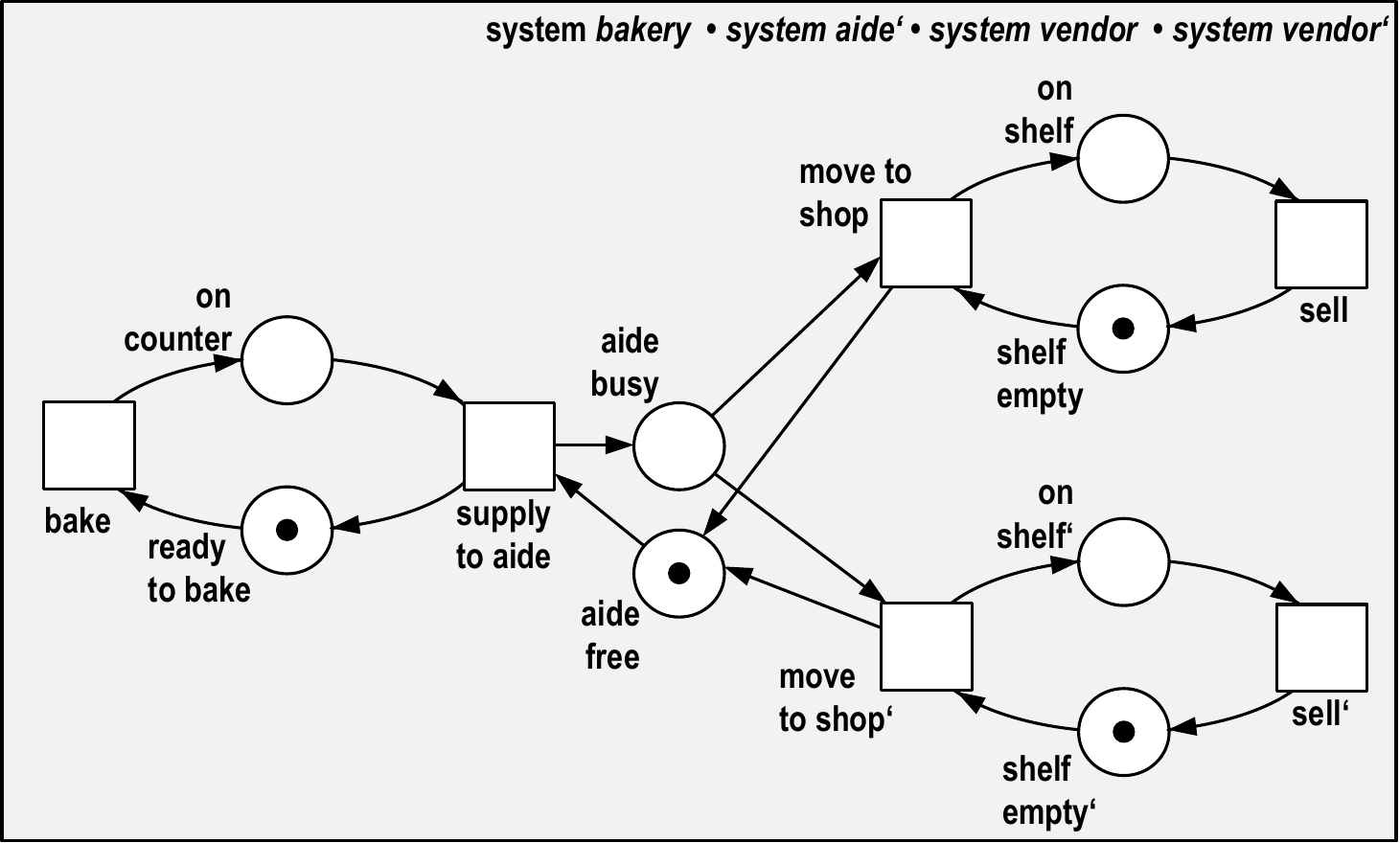}
   \caption{The global system \emph{or} Fig.~\ref{fig:2} with a second seller}
   \label{fig:8new}
\end{figure}

By construction, a run is acyclic, and its places are not branched, i.e., at most one arc starts and at most one arc ends at a place. However, the composition of two runs is not necessarily a run: the result may contain a cycle. Fig.~\ref{fig:13} shows an example.

\begin{figure}[t]
   \centering
   \includegraphics[scale=.40]{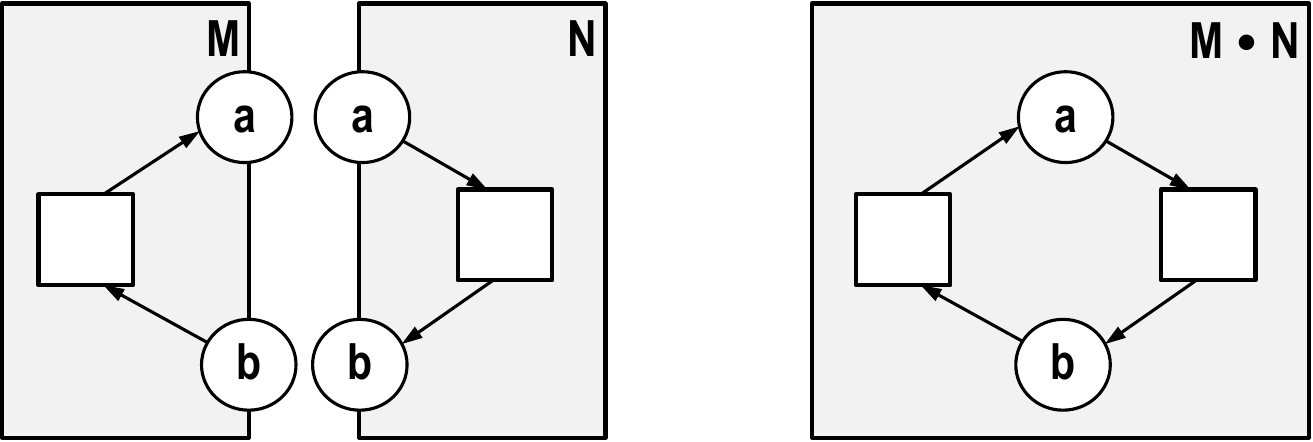}
   \caption{Two technical steps and their composition}
   \label{fig:13}
\end{figure}

In many modeling contexts, the structure of steps is quite regular: ingoing arcs start at the left interface; outgoing arcs end at the right interface. Fig.~\ref{fig:8} shows typical examples of these basic steps. The composition of such steps then produces a \emph{basic run}.

\begin{definition}[basic run]\label{def:8}
A run $R$ is \emph{basic} if $\leftinterface{R}$ consists of all places of $R$ without ingoing arcs, and $\rightinterface{R}$ consists of all places of $R$ without outgoing arcs. 
\end{definition}

Fig.~\ref{fig:3} shows typical basic runs. The realm of basic steps and runs is closed under composition:

\begin{theorem}\label{theorem:1}
The composition $R \compose S$ of basic runs $R$ and $S$ is a basic run. 
\end{theorem}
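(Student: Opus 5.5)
The argument has two parts: first, $R \compose S$ is a run at all, i.e.\ it is acyclic and has unbranched places; second, its interfaces are exactly the source places and the sink places.

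For the first part I would use the basic structure. In a basic run every interface element is a place, and the places of $\rightinterface{R}$ have no outgoing arcs in $R$. Likewise the places of $\leftinterface{S}$ have no ingoing arcs in $S$. A match $(a,b)$ with $a\in\rightinterface{R}$ and $b\in\leftinterface{S}$ therefore inherits at most the single ingoing arc of $a$ (from $R$) and at most the single outgoing arc of $b$ (from $S$). So the merged place is again unbranched. Acyclicity follows because every arc between former $R$-elements and former $S$-elements runs through a matched place from the $R$ side to the $S$ side. A cycle would thus have to lie entirely in $R$ or entirely in $S$, which is impossible, or else cross from $S$ back to $R$, which no arc allows. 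Hence $R\compose S$ is partially ordered.

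To show $R\compose S$ is a run in the sense of Def.~\ref{def:7}, I would induct on the construction of $S$ as $[\emptyset]\compose A_1\compose\dots\compose A_n$. By the acyclicity argument just given, each intermediate $R\compose A_1\compose\dots\compose A_k$ is partially ordered. This relies on associativity of $\compose$ for such interface-matching compositions, which I would take from Def.~\ref{def:5} with a routine check on degrees. The steps may be non-basic, but the partial-order condition holds because the prefix of $S$ is embedded in $S$. This step, reconciling the inductive definition of runs with composition of two whole runs (associativity and the preservation of degrees, hence of matches, when interfaces are reordered), is where I expect the main technical obstacle. If associativity is delicate, I would instead argue directly that $R\compose S$ can be decomposed into steps taken in a linearization of its partial order.

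For the second part I compute the interfaces using Def.~\ref{def:5}. The left interface $\leftinterface{(R\compose S)}$ is $\leftinterface{R}\cup\matchfree(\leftinterface{S},\rightinterface{R})$.
\begin{itemize}
\item Elements of $\leftinterface{R}$ have no ingoing arcs in $R$, and composition adds no arcs into $R$'s non-matched elements.
\item Elements of $\matchfree(\leftinterface{S},\rightinterface{R})$ had no ingoing arcs in $S$ and receive none.
\item Conversely, a place of $R\compose S$ without ingoing arcs is either an interior place of $R$ or $S$ (impossible, since basicness puts all source places in the left interface), or a match $(a,b)$. For a match, $a\in\rightinterface{R}$ is a sink of $R$; if $a$ also had no ingoing arc it would be an isolated place. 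I handle this case by noting that an isolated place of a basic run lies in both interfaces, and so is in $\leftinterface{R}$ in any case.
\end{itemize}
The right interface is handled symmetrically. This isolated-place corner case needs care, and I would treat it explicitly. Together the two parts show $R\compose S$ is a basic run.
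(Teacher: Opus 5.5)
Your acyclicity argument and the first half of your interface argument are the paper's proof: since $R^*$ has no outgoing arcs and ${}^*S$ has no ingoing arcs, no arc leads from the $S$-part back into the $R$-part, and the elements of ${}^*R$ and of $\matchfree({}^*S,R^*)$ remain sources. You go beyond the paper in three ways:
\begin{itemize}
\item you check that places stay unbranched;
\item you check that $R\compose S$ is a run in the inductive sense of Def.~\ref{def:7}, whereas the paper just equates being a run with containing no circle;
\item you attempt the converse inclusion, that every source place of $R\compose S$ lies in ${}^*(R\compose S)$, which the paper does not prove.
\end{itemize}
The Def.~\ref{def:7} part works, but two points need adjustment. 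Associativity is not a routine check; the paper credits it to separate work, so cite it. Your remark that the prefix is embedded in $S$ should be made precise. By Def.~\ref{def:5}, ${}^*(A_1\compose\dots\compose A_k)$ is an initial segment of ${}^*S$. Hence degrees, and therefore the matches with $R^*$, are preserved. So $R\compose A_1\compose\dots\compose A_k$ embeds into $R\compose S$ and inherits acyclicity.

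The genuine gap is the isolated-place case. There you must show that the place $(a,b)$ of $R\compose S$ lies in ${}^*(R\compose S)$. But by Def.~\ref{def:5} every match belongs to the interior, and ${}^*(R\compose S)={}^*R\cup\matchfree({}^*S,R^*)$ contains no matches. So the fact that $a$ itself lies in ${}^*R$ does not help; you have conflated $a$ with $(a,b)$.

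Under the literal definitions the statement actually fails here. Take a step $A$ whose only place $a$ is isolated; Def.~\ref{def:6} permits this whenever $\lambda(a)$ is not adjacent to $\lambda(u)$ in $M$. Put $a$ in both interfaces. Then $A$ is a basic run. Composing $A$ with a disjoint copy of itself yields the interior place $(a,a')$ with no ingoing arcs, so the result is not basic.

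So this case cannot be argued away; it must be excluded by an assumption. A natural one is that every place of a step is adjacent to its transition, as the informal description of steps intends. Since composition inherits all arcs, runs then have no isolated places. A match $(a,b)$ without ingoing arcs is then impossible, because $a\in R^*$ has no outgoing arc in $R$. The interfaces of a basic run also become disjoint, which removes the double occurrence of $a$ (once in ${}^*(R\compose S)$, once inside $(a,b)$) that Def.~\ref{def:5} otherwise produces.

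Finally, your case list for the converse omits the unmatched places of $R^*$ and the places of $S^*$. These are dispatched by the same basicness argument.
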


\begin{proof}
Let $R$ and $S$ be basic runs. By Def.~\ref{def:8}, we have to show:

\begin{itemize}

\item $R \compose S$ is a run, i.e., contains no circle,

\item $\leftinterface{(R \compose S)}$ consists of all places of $R \compose S$ which have no ingoing arcs in $R \compose S$,

\item $\rightinterface{(R \compose S)}$ consists of all places of $R \compose S$ which have no outgoing arcs in $R \compose S$.
\end{itemize}

Fig.~\ref{fig:14} outlines the situation.

To show (i), notice that no element of $\rightinterface{R}$ has an outgoing arc in $R$, and no element of $\leftinterface{S}$ has an ingoing arc in $S$, because $R$ and $S$ are basic runs. Hence, $R \compose S$ contains no arc from $S$ to $R$. Hence, $R \compose S$ contains no circle, as $R$ and $S$ contain no circle.

To show (ii), notice that no element of $\leftinterface{S}$ that contributes to $\leftinterface{R}$ has an ingoing arc in $S$, because $S$ is a basic run. Hence, no element of $\leftinterface{S}$ that contributes to $\leftinterface{(R \compose S)}$ has an ingoing arc in $R \compose S$.

To show (iii), notice that no element of $\rightinterface{R}$ that contributes to $\rightinterface{S}$ has no outgoing arc in $S$, because $R$ is a basic run. Hence, no element of $\rightinterface{R}$ that contributes to $\rightinterface{(R \compose S)}$ has an ingoing arc in $R \compose S$. 

\end{proof}

\begin{figure}[t]
   \centering
   \includegraphics[width=\textwidth]{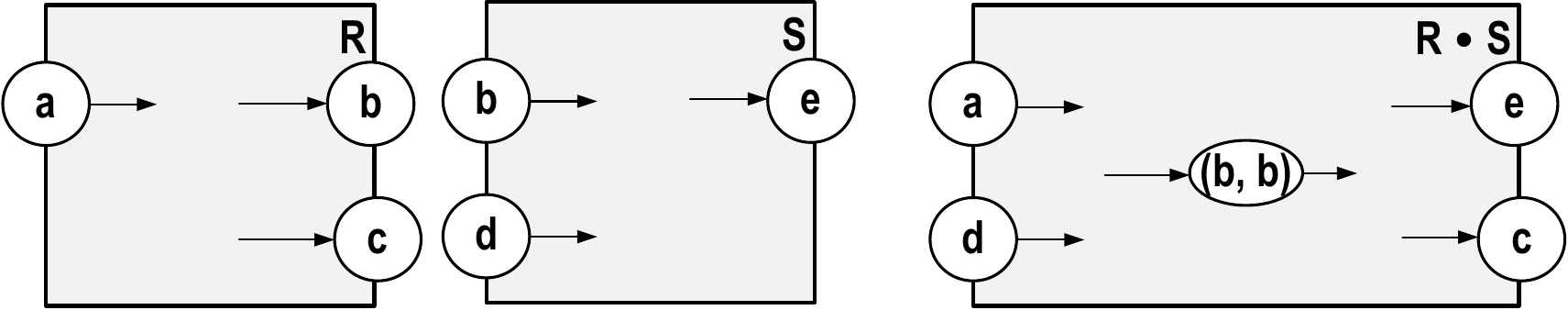}
   \caption{Outline of the proof of Theorem~\ref{theorem:1}}
   \label{fig:14}
\end{figure}

\section{The Composition Theorem }\label{sec:4}

Definition~\ref{def:5} defines the composition of modules, so the left side of equation~\ref{eq:1} in the introduction is well-defined. The right side needs the definition of the composition of sets of runs. To this end, we lift the composition of runs to sets $A$, $B$ of runs in the usual way: $A \compose B$ consists of all composed runs of $A$ and of $B$.

\begin{definition}
Let $A$ and $B$ be two sets of runs. Then $A \compose B = \{a \compose b \text{ } \vert \text{ } a \in A \text{ and } b \in B, \text{ and } a \compose b \text{ is a run}\}$.
\end{definition}

\begin{theorem}
For any two system nets $M$ and $N$ holds:
\begin{center}
$\sruns(M \compose N) = \sruns(M) \compose \sruns(N)$. 
    \end{center}
\end{theorem}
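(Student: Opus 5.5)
We prove the two inclusions separately. Both rely on the observation that a step of $M \compose N$ is essentially a step of $M$ or a step of $N$. By Def.~\ref{def:5}, every transition $t$ of $M \compose N$ stems from $M$ or from $N$ (transitions matched across $\rightinterface{M}$ and $\leftinterface{N}$ count as stemming from both). The arcs of $t$ in $M \compose N$ are inherited from the corresponding arcs in $M$ or $N$, and labels are preserved by matching. Hence, by Def.~\ref{def:6}, a step of $M \compose N$ labeled $t$ is a step of $M$ if $t$ stems from $M$, and a step of $N$ if $t$ stems from $N$. Conversely, every step of $M$ or of $N$ is a step of $M \compose N$. The only subtlety is places that are matched in $M \compose N$ but carry arcs on only one side; here we use that the label of a match equals the common label of its partners.

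For $\sruns(M \compose N) \subseteq \sruns(M) \compose \sruns(N)$, we argue by induction along Def.~\ref{def:7}. The empty run is $[\emptyset] \compose [\emptyset]$. For the inductive step, take $R \compose A$ with $R = R_M \compose R_N$, where $R_M \in \sruns(M)$, $R_N \in \sruns(N)$, and $A$ is a step of $M \compose N$. If $A$ is a step of $N$, we set $R_N' = R_N \compose A$. Associativity of $\compose$ then gives $R \compose A = R_M \compose R_N'$. If $A$ is a step of $M$, we must move $A$ past $R_N$, i.e.\ show $R_M \compose R_N \compose A = (R_M \compose A') \compose R_N'$ for a suitably re-interfaced copy $A'$ of $A$. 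Such a copy exists because Def.~\ref{def:6} puts no restriction on interfaces. Places of $A$ that are consumed from or produced into $N$ become interface elements of $A'$, positioned so that they match the right elements of $R_N$. In both cases acyclicity is inherited from $R \compose A$: sub-modules of a partially ordered module are partially ordered, which gives $R_M \compose A' \in \sruns(M)$.

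For $\sruns(M) \compose \sruns(N) \subseteq \sruns(M \compose N)$, write $R_M = A_1 \compose \dots \compose A_k$ and $R_N = B_1 \compose \dots \compose B_l$. Suppose $R_M \compose R_N$ is a run, i.e.\ acyclic. Its transitions are partially ordered, and we choose a linearization compatible with this order. Ordering the steps $A_i, B_j$ along this linearization, and again adjusting interfaces, yields $R_M \compose R_N$ as an iterated composition of steps of $M \compose N$. Each intermediate prefix is a sub-module of an acyclic module closed under predecessors, hence partially ordered. So every prefix is a run of $M \compose N$ by Def.~\ref{def:7}.

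The main obstacle is the \emph{rearrangement lemma} used in both directions. It states that composition is associative, and that steps whose transitions are causally independent may be commuted in a composition, provided their interface elements are reassigned so that exactly the same matches form. The difficulty is that matching is determined by labels and degrees, so the reassignment must preserve the degree counts on the interfaces. I would first try to prove associativity of $\compose$ directly from Def.~\ref{def:5} by computing both sides elementwise. I would then prove a two-step swap lemma $A \compose B = B' \compose A'$ for steps with no arc path between them. The general reordering follows from these two facts by induction on the number of inversions relative to the chosen linearization.
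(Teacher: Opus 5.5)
\textbf{Where it breaks: the swap lemma.} Your swap lemma $A \compose B = B' \compose A'$ for causally independent steps is false, and both the $M$-step case of your first inclusion and your reordering in the second rely on it. The cause is the ordering clauses of Def.~\ref{def:5}. In $X \compose Y$, the elements of $\leftinterface{X}$ precede those of $\matchfree(\leftinterface{Y}, \rightinterface{X})$, and the elements of $\rightinterface{Y}$ precede those of $\matchfree(\rightinterface{X}, \leftinterface{Y})$. So swapping two steps reverses the relative order, and hence the degrees, of equally labelled interface elements that the two steps contribute. No re-interfacing of the individual steps can undo this.

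Here is a concrete instance:
\begin{itemize}
\item Let $M$ have a transition $a$ with one input place $x$ and $\rightinterface{M} = \{x\}$. Let $N$ have a transition $b$ with one output place $y$ and $\leftinterface{N} = \{y\}$, with $x$ and $y$ equally labelled. Then $M \compose N$ is the net $b \to (x,y) \to a$.
\item Let $B$ be a $b$-step whose output place $q$ forms $\leftinterface{B}$, and let $A$ be an $a$-step whose input place $p$ forms $\leftinterface{A}$. Def.~\ref{def:6} allows both.
\item Then $B \compose A = ([\emptyset] \compose B) \compose A$ is a run of $M \compose N$, with left interface ordered $q, p$.
\item Now take any decomposition $R_M \compose R_N$. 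Arcs are inherited, $a$ occurs only in $R_M$, and $b$ only in $R_N$. So $p$ must lie in $\leftinterface{R_M}$ and $q$ in $\matchfree(\leftinterface{R_N}, \rightinterface{R_M})$, and Def.~\ref{def:5} orders these $p, q$.
\end{itemize}
Your required identity here reads $B \compose A = (R_M \compose A') \compose R_N'$ with $R_M = [\emptyset]$, i.e.\ $B \compose A = A' \compose B'$, and it has no solution. Worse, under the definitions as stated, $B \compose A \in \sruns(M \compose N)$ but $B \compose A \notin \sruns(M) \compose \sruns(N)$.

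The paper's proof does not close this either. Its Case~1 only asserts that $R_M \compose R_k \compose R_N$ is a run of $M \compose N$; it never shows that this equals $R_M \compose R_N \compose R_k$. You were right to single this out as the crux. But it cannot be settled by a commutation lemma; it requires changing the framework, for example equality up to interface order or restrictions on the interfaces of steps.

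\textbf{Two further points.} First, you claim that a step of $M \compose N$ labelled $t$ is a step of $M$ whenever $t$ stems from $M$. This fails when $t$ is a match $(t_M, t_N)$ of interface transitions, as in Fig.~\ref{fig:5}. Such a step has arcs to places of both components, so it violates the \emph{iff} of Def.~\ref{def:6} for $M$ and for $N$. Appending it whole to $R_M$ or to $R_N$ therefore yields no run of either. The paper's Case~3 treats exactly this situation: it splits the step as $S_M \compose S_N$, with the transition in $\rightinterface{S_M}$ and in $\leftinterface{S_N}$, so the two halves re-fuse as matching partners. Your case split has no counterpart.

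Second, in the inclusion $\sruns(M) \compose \sruns(N) \subseteq \sruns(M \compose N)$, your linearization is unnecessary and imports the faulty swap lemma. Def.~\ref{def:7} never requires steps to be appended in causal order. By associativity, $R_M \compose R_N = A_1 \compose \dots \compose A_k \compose B_1 \compose \dots \compose B_l$ in the given order. Every prefix is partially ordered, because composition embeds each operand injectively and preserves arcs, so a cycle in a prefix would be a cycle in $R_M \compose R_N$. This is essentially the paper's argument for that direction, and it needs no reordering.
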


\begin{proof}
    
First, we show: $\sruns(M \compose N) \subseteq \sruns(M) \compose \sruns(N)$.

To be shown: for each run $R$ of $M \compose N$, there exist runs $R_M$ of $M$ and $R_N$ of $N$ such that $R = R_M \compose R_N$.

So, let $R = (P, T; F)$ be a run of $M \compose N$, with labeling function $\lambda$. We show the proposition by induction on $\vert T \vert$. 

\begin{enumerate}

\item Induction basis: Let $\vert T \vert = \emptyset$. Then $R$ is the empty module $[\emptyset]$. $[\emptyset]$ is also a run of $M$ and a run of $N$. Furthermore, obviously holds: $[\emptyset] \compose [\emptyset] = [\emptyset]$.

\item Induction step: Let $\vert T \vert = k$, for $k > 0$. Then $R$ can be written as $R = R_1 \compose \dots \compose R_k$, with steps $R_1, \dots, R_k$ of $M \compose N$ (by Def.~\ref{def:7}). Then there exist runs $R_M$ of $M$ and $R_N$ of $N$ such that $R_1 \compose \dots \compose R_{k-1} = R_M \compose R_N$, by induction assumption. Let $R_k = (P, \{t\}; E)$ and let $\lambda(t) = u$. Then $u$ is either a transition of $M$, or a transition of $N$, or is a match $(u_M, u_N)$ of a transition $u_M$ of $M$ and a transition $u_N$ of $N$ (by Def.~\ref{def:5}). So, we distinguish three cases:

\begin{itemize}

\item Case 1: $u$ is a transition of $M$. Then $R_k$ is a step of $M$ (by Def.~\ref{def:6}). Then $R_k$ is a step of $M \compose N$ (by Def.~\ref{def:5}). Then $R_M \compose R_k$ is a run of $M$. Then $R_M \compose R_k \compose R_N$ is a run of $M \compose N$ (by Def.~\ref{def:7}).

\item Case 2: $u$ is a transition of $N$. Then $R_k$ is a step of $N$ (by Def.~\ref{def:6}). Then $R_k$ is a step of $M \compose N$ (by Def.~\ref{def:5}). Then $R_N \compose R_k$ is a run of $M$ Then $R_M \compose R_N \compose R_k$ is a run of $M \compose N$ (by Def.~\ref{def:7}).

\item Case 3: $u$ is a transition shaped $u = (u_M, u_N)$ with $u_M$ a transition of $M$ and $u_N$ a transition of $N$, and $u_M$ and $u_N$ are matching partners of $\rightinterface{M}$ and $\leftinterface{N}$. Then the step $R_k = (P, \{t\}; E)$ partitions into a step $S_M = (P_M, \{t_M\}, E_M)$ of $M$, and a step $S_N = (P_N, \{t_N\}; E_N)$ of $N$, so that $R_k = S_M \compose S_N$. Then $R_M \compose R_N \compose S_M \compose S_N \compose R_k$ is a run of $M \compose N$ (by Def.~\ref{def:7}). Fig.~\ref{fig:15} outlines this case.
\end{itemize}
\end{enumerate}

\begin{figure}[t]
   \centering
   \includegraphics[scale=.40]{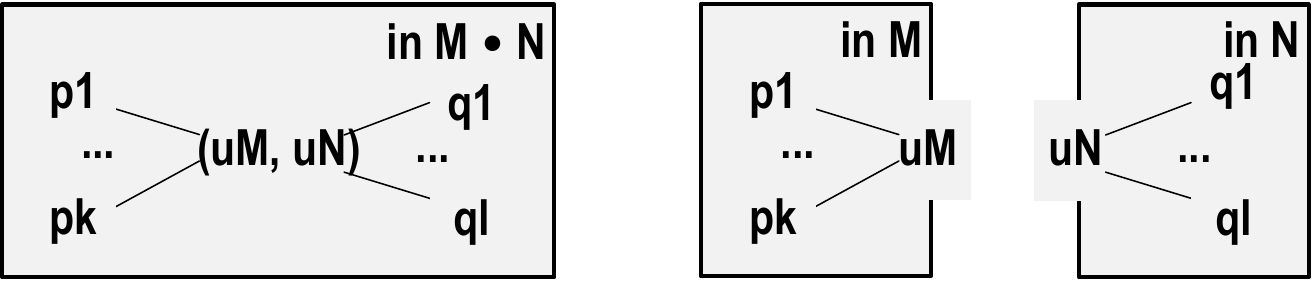}
   \caption{Components of step 3}
   \label{fig:15}
\end{figure}

To show $\sruns(M) \compose \sruns(N) \subseteq \sruns(M \compose N)$, let $R$ be a run of $\sruns(M) \compose \sruns(N)$. We have to show that $R$ is a run of $M \compose N$.

$R$ can be written as $R = A_1 \compose \dots \compose A_l$, with steps $A_1, \dots, A_k$ of $M$ and steps $A_{k+1}, \dots, A_l$, of $N$, for some $1 \leq k \leq l$. Then $A_1, \dots, A_l$ are steps of $M \compose N$ (by Def.~\ref{def:6}). Hence, $R$ is a run of $M \compose N$ (by Def.~\ref{def:7}).
\end{proof}

\section{Related Work}\label{sec:5}

Partial order semantics is often viewed not as a fundamental, conceptual alternative to interleaving semantics, but as a technical trick to speed up analysis algorithms. Specifically, some versions of partial order semantics aim to reduce the \say{state explosion problem} in distributed systems. A typical example in e.g. process mining is \cite{DBLP:conf/apn/SiddiquiAS25}. 

The composition of Petri nets has a long-standing history, with many contributions. We only mention some relevant ideas from the past three decades: \cite{Christensen_Petrucci_93} composes many Petri net modules in one go, merging equally labeled places as well as transitions. Associativity of composition is not addressed, but it seems obvious that this composition operator is not associative. A long-standing initiative with many variants is algebraic calculi for Petri nets, such as the box calculus and Petri net algebras in various forms \cite{Best_et_al_01}. This text collects properties of infinite partial-order runs. In the spirit of process algebras, these calculi define classes of nets inductively using various composition operators, merging transitions with the same labels. The associativity of composition is mostly assumed but rarely explicitly discussed. \cite{Baldan_08} defines composition $A \compose B$ of \say{composable} nets $A$ and $B$ within a categorical framework, where $A$ and $B$ share a common net fragment, $C$. The effect of composition on distributed runs is studied, but the question of whether composition is associative is not addressed. 

\cite{Kindler_Petrucci_09} suggests a general framework for \emph{modular} Petri nets, i.e., nets with features to compose a net with its environment. A module may occur in many instances. Associativity of composition is not discussed in this paper, but is implicitly assumed. A recent survey on composition operators for Petri nets is \cite{DBLP:conf/apn/AmparoreD22}. 

Petri nets with two-faced interfaces are discussed in \cite{Rathke_et_al_14} and \cite{Sobocinski_10}. They introduce Petri nets with boundaries (PNB) that feature two composition operators. Interestingly, both operators are associative. 

Double-faced interfaces for modeling techniques beyond Petri nets are present in the literature, though they tend to be more specialized than the version shown here. A common example is the \say{piping} or \say{chaining} operator $P >> Q$ used in the language CSP \cite{Roscoe_10}. The associativity of $>>$ is assumed without further discussion.

Few contributions combine runs and modularization. Early and detailed studies include \cite{Vogeler92} and \cite{LindeGoers94}. \cite{Kindler_97} proposes a compositional partial order semantics. This composition operator simply fuses equally labeled interface elements. However, it is not associative. 

The initial ideas for modules and the composition operator used in this paper originate from runs of Petri nets described in \cite{Reisig_05}. Associativity of the composition operator was proven in \cite{Reisig_19}. For more details, see \cite{Fettke_Reisig_24}. Our runs form a monoid, similar to how formal languages do. The main theorem of this paper connects systems and their runs in a compositional way. To our knowledge, our composition operator, featuring double-faced interfaces, associative composition, and compositionality, is unmatched.

%
%
%

\bibliographystyle{splncs04}
\bibliography{main}

@inproceedings{DBLP:conf/apn/AmparoreD22,
  author       = {Elvio G. Amparore and
                  Susanna Donatelli},
  editor       = {Luca Bernardinello and
                  Laure Petrucci},
  title        = {The Ins and Outs of {P}etri Net Composition},
  booktitle    = {Application and Theory of Petri Nets and Concurrency - 43rd International Conference, {PETRI} {NETS} 2022, Bergen, Norway, June 19-24, 2022, Proceedings},
  series       = {Lecture Notes in Computer Science},
  volume       = {13288},
  pages        = {278--299},
  publisher    = {Springer},
  year         = {2022}
}

@article{DBLP:journals/cacm/Gischer81,
  author       = {Jay L. Gischer},
  title        = {Shuffle Languages, Petri Nets, and Context-Sensitive Grammars},
  journal      = {Commun. {ACM}},
  volume       = {24},
  number       = {9},
  pages        = {597--605},
  year         = {1981},
  doi          = {10.1145/358746.358767}
}

@book{LindeGoers94,
    author = {Hans-Günther Linde-Göers},
    title = {Compositional Partial Order Semantics of Petri Boxes},
    publisher = {Shaker},
    year = {1994}
}

@inproceedings{DBLP:conf/apn/SiddiquiAS25,
  author       = {Ariba Siddiqui and
                  Wil M. P. van der Aalst and
                  Daniel Schuster},
  editor       = {Elvio Gilberto Amparore and
                  Lukasz Mikulski},
  title        = {Computing Alignments for Partially-Ordered Trace Through Petri Net Unfoldings},
  booktitle    = {Application and Theory of Petri Nets and Concurrency - 46th International
                  Conference, {PETRI} {NETS} 2025, Paris, France, June 22-27, 2025,
                  Proceedings},
  series       = {Lecture Notes in Computer Science},
  volume       = {15714},
  pages        = {411--432},
  publisher    = {Springer},
  year         = {2025}
}

@book{Vogeler92,
    author = {Walter Vogeler},
    title = {Modular Construction and Partial Order Semantics of Petri Nets},
    publisher = {Springer},
  series       = {Lecture Notes in Computer Science},
  volume       = {625},
    year = {1992}
}

@book{dumas2018fundamentals,
    author = {Marlon Dumas and Marcello La Rosa and Jan Mendling and Hajo A. Reijers},
    title = {Fundamentals of Business Process Management},
  publisher    = {Springer},
    edition = 2,
    year = 2018
}

@article{fettke2025bpmai,
	author = {Fettke, Peter and Di Francescomarino, Chiara},
	journal = {KI - K{\"u}nstliche Intelligenz},
	number = {2},
	pages = {67--79},
	title = {Business Process Management and Artificial Intelligence},
	volume = {39},
	year = {2025}
}

@book{Fettke_Reisig_24,
 author = {Fettke, Peter and Reisig, Wolfgang},
 publisher = {Springer},
 title = {Understanding the Digital World: Modeling with \textsc{Heraklit}},
 year = {2024}
}

@book{Best_et_al_01,
 author = {Best, Eike and Devillers, Raymond and Koutny, Maciej},
 publisher = {Springer Science \& Business Media},
 title = {Petri Net Algebra},
 year = {2001}
}

@inproceedings{Christensen_Petrucci_93,
 author = {Christensen, S{\o}ren and Petrucci, Laure},
 booktitle = {International Conference on Application and Theory of {P}etri Nets},
 organization = {Springer},
 pages = {113--133},
 title = {Towards a Modular Analysis of Coloured {P}etri Nets},
 year = {1993}
}

@inproceedings{Kindler_97,
 author = {Kindler, Ekkart},
 booktitle = {Application and Theory of Petri Nets 1997: 18th International Conference, ICATPN'97 Toulouse, France, June 23--27, 1997 Proceedings 18},
 organization = {Springer},
 pages = {235--252},
 title = {A Compositional Partial Order Semantics for {P}etri Net Components},
 year = {1997}
}

@inproceedings{Kindler_Petrucci_09,
 author = {Kindler, Ekkart and Petrucci, Laure},
 booktitle = {Applications and Theory of Petri Nets: 30th International Conference, PETRI NETS 2009, Paris, France, June 22-26, 2009. Proceedings 30},
 organization = {Springer},
 pages = {43--62},
 title = {Towards a Standard for Modular {P}etri Nets: a Formalisation},
 year = {2009}
}

@article{Petri_77,
 author = {Petri, Carl-Adam},
 journal = {Gesellschaft f\"ur Mathematik und Datenverarbeitung},
 title = {{N}on-{S}equential {P}rocesses, {I}nterner {B}ericht Isf-77-5},
 year = {1977}
}

@inproceedings{Rathke_et_al_14,
 author = {Rathke, Julian and Soboci{\'n}ski, Pawe{\l} and Stephens, Owen},
 booktitle = {Reachability Problems: 8th International Workshop, RP 2014, Oxford, UK, September 22-24, 2014. Proceedings 8},
 organization = {Springer},
 pages = {230--243},
 title = {Compositional Reachability in {P}etri Nets},
 year = {2014}
}

@inproceedings{Reisig_05,
 author = {Reisig, Wolfgang},
 booktitle = {International Conference on Application and Theory of Petri Nets},
 organization = {Springer},
 pages = {349--364},
 title = {On the Expressive Power of {P}etri Net Schemata},
 year = {2005}
}

@article{Reisig_19,
 author = {Reisig, Wolfgang},
 journal = {Acta Informatica},
 number = {3},
 pages = {229--253},
 publisher = {Springer},
 title = {Associative Composition of Components with Double-Sided Interfaces},
 volume = {56},
 year = {2019}
}

@book{Roscoe_10,
 author = {Roscoe, Andrew W},
 publisher = {Springer Science \& Business Media},
 title = {Understanding Concurrent Systems},
 year = {2010}
}

@inproceedings{Sobocinski_10,
 author = {Soboci{\'n}ski, Pawe{\l}},
 booktitle = {International Conference on Concurrency Theory},
 organization = {Springer},
 pages = {554--568},
 title = {Representations of {P}etri Net Interactions},
 year = {2010}
}

@book{Weske_19,
 author = {Weske, Mathias},
 edition = {3rd},
 journal = {Berlin},
 title = {Business Process Management: Concepts, Languages, Architectures},
 publisher = {Springer Berlin, Heidelberg},
 year = {2019}
}

@inproceedings{Baldan_08,
  author       = {Paolo Baldan and
                  Andrea Corradini and
                  Hartmut Ehrig and
                  Barbara K{\"{o}}nig},
  editor       = {Hartmut Ehrig and
                  Reiko Heckel and
                  Grzegorz Rozenberg and
                  Gabriele Taentzer},
  title        = {Open {P}etri Nets: Non-deterministic Processes and Compositionality},
  booktitle    = {Graph Transformations, 4th International Conference, {ICGT} 2008,
                  Leicester, United Kingdom, September 7-13, 2008. Proceedings},
  pages        = {257--273},
  publisher    = {Springer},
  year         = {2008}
}





\end{document}